\colorlet{shadecolor}{yellow}
\newtheorem{theorem}{\bf Theorem}
\newtheorem{corollary}{\bf Corollary}
\begin{document}

\def\QEDclosed{\mbox{\rule[0pt]{1.3ex}{1.3ex}}}
\def\QEDopen{{\setlength{\fboxsep}{0pt}\setlength{\fboxrule}{0.2pt}\fbox{\rule[0pt]{0pt}{1.3ex}\rule[0pt]{1.3ex}{0pt}}}}
\def\QED{\QEDopen}
\def\proof{}
\def\endproof{\hspace*{\fill}~\QED\par\endtrivlist\unskip}

\bstctlcite{IEEEexample:BSTcontrol}
    \title{RIS-aided Wireless Communications: Can RIS Beat Metal Plate?}
  \author{Jiangfeng Hu,
      Haifan Yin,
      Li Tan,
      Lin Cao,
      Xilong Pei

\thanks{J. Hu, H. Yin, L. Tan, L. Cao  and X. Pei are with Huazhong University of Science and Technology, 430074 Wuhan, China. (e-mail: \{jiangfenghu, yin, ltan, caolin, pei\}@hust.edu.cn. {The corresponding author is Li~Tan.}}
\thanks{This work was supported in part by the National Key Research and Development Program of China under Grant 2020YFB1806904, in part by the National Natural Science Foundation of China under Grants 62071191, 62071192, and 1214110.}
}

\maketitle

\begin{abstract}
Reconfigurable Intelligent Surface (RIS) has recently been regarded as a paradigm-shifting technology beyond 5G, for its flexibility on smartly adjusting the response to the impinging electromagnetic (EM) waves. Usually, RIS can be implemented by properly reconfiguring the adjustable parameters of each RIS unit to align the signal phase on the receiver side. And it is believed that the phase alignment can be also mechanically achieved by a metal plate with the same physical size. However, we found in the prototype experiments that, a well-rotated metal plate can only approximately perform as well as RIS under limited conditions, although its scattering efficiency is relatively higher. When it comes to the case of spherical wave impinging, RIS outperforms the metal plate even beyond the receiving near-field regions. We analyze this phenomenon with wave optics theory and propose explicit scattering models for both the metal plate and RIS in general scenarios. 
Finally, the models are validated by simulations and field measurements. 

\end{abstract}

\begin{IEEEkeywords}
Reconfigurable intelligent surfaces, Scattering model, Wave optics theory.
\end{IEEEkeywords}

\section{Introduction}

\IEEEPARstart{R}econfigurable Intelligent Surface (RIS) is widely believed to be a promising technology for its capability of smartly configure the impinging EM waves with a large number of sub-wavelength-sized passive scattering units \cite{RISpop}. A series of application scenarios have been investigated under different assumptions \cite{liu2021survey}. However, the RIS technology is still in its infancy, since the actual physical characteristics, the efficient reconfiguration methodologies and  deployment strategies in practical scenarios have not been well-studied so far. Among these challenges, the physical characteristics remain urgent to be modeled and clarified, which are the basis of other effective extensions.

In the early literature, RIS is usually described as an anomalous mirror that reflects impinging plane waves as outgoing plane waves with an unnatural angle (which is different from the specular angle predicted by Snell's Law). Recently, it has been clarified by \cite{RISMythEmil} that neither the plane wave nor infinite surface area will typically appear in practice, thus they are regarded as theoretical idealizations for RIS. And the authors illustrated by simulations that an RIS of finite size can generally not be interpreted as a mirror when impinged by plane waves. 

It is interesting to think, what will actually happen if a spherical wave (or even a beam) is impinging on a finite-sized RIS? The authors of \cite{tang2021mmwave} validated by experimental measurements that an RIS configured to uniform state with high scattering efficiency has similar characteristics as a metal plate with the same shape and size. However, the transceivers in \cite{tang2021mmwave} are placed in the far-field of the RIS operating in millimeter-wave bands, and there is lack of the comparisons between the well-rotated metal plate and a smartly-configured RIS, which is a relatively more general application scenario for RIS \cite{liu2021survey}. In fact, some researchers suggest we may carefully employ ``fully-passive RIS" (e.g., the metal plate) in some specific scenarios where the EM environment is quasi-static to further reduce the hardware and energy overhead. The paper \cite{PathlossEmil} 
indicated that a well-rotated metal plate is as efficient as RIS under the far-field assumption. What if the array grows large enough relative to the communication distance thus we are in the near-field of RIS array? To the best of our knowledge, there is still a lack of experimental validations in the literature whether or to what extent a well-configured RIS can be replaced with a well-rotated ``fully-passive RIS" of the same shape and size, especially when impinged by spherical waves. 
In view of the conditions above, the main contributions of this paper are:
\begin{itemize}
    \item Based on the wave optics theory, we establish explicit scattering efficiency models for both the RIS and the metal plate, which are represented by the concept of radar cross section (RCS), and extend the standard bistatic RCS \cite{balanis2012advanced} of a rectangle plate to more general cases. 
    \item We verify the the proposed models by simulation and field measurements. The measurement results are in good agreement with the simulation results predicted by the proposed models.
    \item We validate by both simulation and measurement results that a well-configured RIS can outperform a well-rotated metal plate of the same size in most practical cases, which indicates the power-focusing superiority of RIS.
\end{itemize}
\emph{Notations:} We use uppercase and lowercase bold-face variables to denote matrices and vectors, respectively. ${{\bf{r}}} = {x}\hat{\bf{x}}+{y}\hat{\bf{y}}+{z}\hat{\bf{z}}$ is a vector with Cartesian coordinates $(x,y,z)$, and $ {\bf{\hat r}} = uni(\bf {r})$, which is the unitization form of $\bf {r}$ representing its direction. $\left\| {\bf{r}} \right\|$ and ${{\bf{r}}^*}$ denote the Frobenius norm and  the conjugate transpose of $\bf {r}$, respectively. ${\nabla}$ is the nabla operator, and $\nabla_{\mathbf{r}}$ denotes the gradient evaluated at the $\bf {r}$ vector. $\cdot$ and ${\times}$ denote the scalar and vector products between vectors, respectively, and ${\bf{I}}$ is the identity matrix.

\begin{figure}
\centering 
  \includegraphics[width=3.3in]{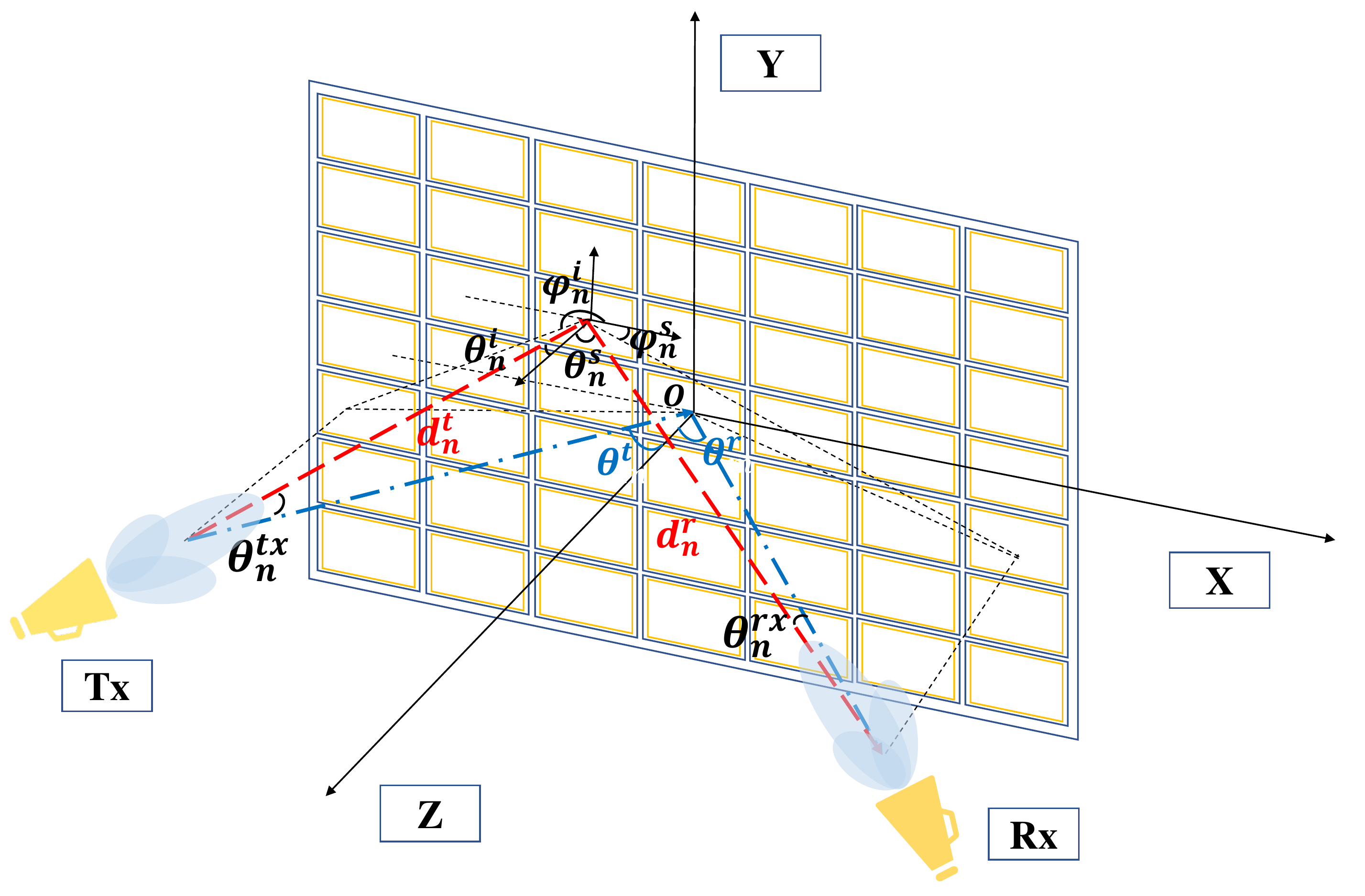}
 \caption{The geometric illustration of an RIS-aided communication system.} \label{Experimental skectchmap}
\end{figure}
\section{System model}
We consider an RIS-aided communication system in three-dimensional (3D) space. For ease of exposition and experiments, the transmitter (Tx) and the receiver (Rx) are both equipped with a single antenna. The generalization of the proposed model to multiple antenna settings is straightforward. The above settings are consistent in both simulation and field measurement experiments. The Tx, Rx and RIS are placed in a Cartesian coordinate system, as shown in Fig. (\ref{Experimental skectchmap}), whose locations  Tx and Rx are denoted as ${{\bf{p}}_t}$, ${{\bf{p}}_r}$ and ${{\bf{p}}_r}$, respectively. 
\par Without loss of generality, we place the RIS of a rectangle shape with ${N_v} \times {N_h}$ elements on the $xOy$-plane, with its geometric center aligned with the origin, where ${N_v}$ and ${N_h}$ denote the number of columns and rows of RIS elements. The two sides of the RIS are placed in parallel to the x-axis and y-axis. The element cells are designed to be edge-to-edge thus the vertical (horizontal) distance between their geometric centers (i.e., the spatial period) is exactly the width (length) of a single RIS element cell, denoted by ${d_v}$ and ${d_h}$, respectively. For more design details, we refer the interested readers to our previous work \cite{pei2021RIS5.8GHz}. We denote the $n$-th RIS element located at ${\bf{p}_{n}}$ by ${C_{n}}$, $n = 1,...,{N_v} {N_h}$. 

We assume the Tx and Rx are always in the element far-field, instead of the array far-field \cite{juan2020}. This usually makes sense at a normal communication distance, which is long enough relative to the sub-half-wavelength element size. For simplicity, the mutual coupling effect is not considered. Therefore, the impinging EM waves can be regarded (approximately) as plane waves for a particular RIS element and we can exploit the signals scattered by each independent element to calculate the signal at the receiver side. For a single RIS element ${C_{n}}$, we model its scattering factor ${S_{n}}$ as the combination of an inherent part and a reconfigurable part as:
\begin{equation}\label{combination scattering factor}
{S_{n}\left( {{\bf{\hat r}}_{n}^i,{\bf{\hat r}}_{n}^s}, {u_{n}}\right)} = f_{n}\left( {{\bf{\hat r}}_{n}^i,{\bf{\hat r}}_{n}^s} \right) {R_{n}}({u_{n}}),
\end{equation} where $f\left( {{\bf{\hat r}}_{n}^i,{\bf{\hat r}}_{n}^s} \right)$ is the inherent part (i.e., the original element scattering characteristics) formulated by a bidirectional scattering distribution (BSD) function which will be derived and discussed in details later. ${\bf{\hat r}}_{n}^i = uni({{\bf{p}}_{n}} - {{\bf{p}}_t})$, ${\bf{\hat r}}_{n}^s = uni({{\bf{p}}_r} - {{\bf{p}}_{n}})$ denote the direction of the incident wave, the direction of the scattered wave, respectively. The reconfigurable part
\begin{equation}\label{reconfig part}
{R_{n}}({u_{n}}) = {\alpha _{n}}({u_{n}}){e^{ - j{\phi _{n}}({u_{n}})}}
\end{equation}
is determined by the on-board control signal ${u_{n}}$ (e.g., a biasing voltage that can be properly selected), where ${\alpha _{n}}$ and ${\phi _{n}}$ represent the amplitude and phase response, respectively.
Therefore, if we denote the transmit signal by ${x}$, the received signal reflected by a single RIS element ${C_{n}}$ can be formulated as $y_{n} = {g_{n}} {S_{n}} {h_{n}} {x}$, where  ${h_{n}}$ and  ${g_{n}}$ denote the channel coefficient between the Tx and ${C_{n}}$, the channel coefficient between ${C_{n}}$ and the Rx, respectively. Take ${h_{n}}$ for example, the channel coefficient can be formulated by
${h_{n}} = {\beta _{n}}{e^{ - j2\pi \frac{{d{t_{n}}}}{\lambda }}}$, where $\lambda$ is the wave length and ${d_{n} ^t}$ is the distance between the Tx and the RIS element ${C_{n}}$. And ${\beta _{n}} = \sqrt {{{{\beta _0}\cos \theta _{n}^{tx}}}/{4 \pi({d_{n} ^t}) ^\gamma}}$ represents the path-loss factor,  where ${{\beta _0}}$ and $\gamma$ are constant depending on the communication environment, $\theta _{n}^{tx}$ is the angle between Tx-O and Tx-${C_{n}}$ to account for the effect of antenna directivity \cite{wankai2020model}, as shown in Fig. \ref{Experimental skectchmap}. The channel coefficient ${g}_{n}$ can be formulated similarly.

\section{MODELING OF THE RECEIVED SIGNAL}

In this section, we model and analyze the received signal that is reflected by either a metal plate or an RIS. In the following context, we adopt the term {\it{diffraction}} to emphasize the scattering at the edges of an element cell. And these edges usually occur when material discontinuity exists, which may lead to the spatial distribution discontinuities of permittivity and permeability.  

As analyzed in \cite{knott2004rcsbook}, for a perfectly conducting rectangular plate, the radar cross section is an appropriate parameter to depict its scattering efficiency in the given directions. RCS is originally defined as the ratio of the incident and scattered electric field power (or equivalently the magnetic field power) and can be formulated as: ${\sigma} = \mathop {\lim }\limits_{R \to \infty } \left( {4\pi {R^2}{{{{\left| {{{\bf{E}}^s}} \right|}^2}}}/{{{{\left| {{{\bf{E}}^i}} \right|}^2}}}} \right)$, where ${\bf{E}}^i$ and ${\bf{E}}^s$ denote the incident and scattering electric field, respectively, and $R$ is the distance between the source and the geometric center of the plate. 
\par The formula above inherits the far-field assumption. Nevertheless, one of the main advantages of RIS is the nearly-passive characteristic, which enables it to have a larger array size than the massive MIMO setups to interact with more EM waves propagating in wireless environment \cite{ScalingLawsEmil}. Therefore, sometimes the transceivers will inevitably be in the near field of the whole RIS array, and a direct application of the definition above may not be always appropriate in our experiments. Fortunately, though we may not always in the far field of the whole surface, we will be in the far field of the element in most cases. For fair comparison, we discretize the metal plate in the same way as our designed RIS prototype (i.e., same number of cell, equal cell size) and calculate the RCS on each element cell separately before summing them up. Note that it is a common operation in the literature of RCS, where a continuous surface is divided into a collection of small discrete patches \cite{radarhandbook}. Under these assumptions, first we have the following theorem for the metal cell: 
\begin{theorem}\label{theorem1}
\emph{Within the element far-field regions, the RCS of a single metal cell ${M_{n}}$ under the given incident angle and observation angle is formulated as:}
\begin{equation}\label{metal lemma}
\begin{aligned}
&\sigma_{n}^{M}\left(\hat{\mathbf{r}}_{n}^{i}, \hat{\mathbf{r}}_{n}^{s}\right)=4 \pi\left({d_{v} d_{h}}/{\lambda}\right)^{2}  \cos ^{2} \theta_{n}^{i}  \cdot  \\
&\left(\cos ^{2} \theta_{n}^{s} \cos ^{2} \varphi_{n}^{s}+\sin ^{2} \varphi_{n}^{s}\right)\left({\sin {(X)}}/{X}\right)^{2}\left({\sin {(Y)}}/{Y}\right)^{2}.
\end{aligned}
\end{equation}
\end{theorem}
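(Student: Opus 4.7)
The plan is to derive the expression by applying the Physical Optics (PO) approximation to a rectangular PEC cell and then evaluating the resulting far-field radiation integral. First I would place the cell flat on the $xOy$-plane, centered at $\mathbf{p}_n$, with outward normal $\hat{\mathbf{z}}$ and side lengths $d_h$, $d_v$ along $\hat{\mathbf{x}}$ and $\hat{\mathbf{y}}$. I parametrize the unit directions in spherical coordinates so that $\hat{\mathbf{r}}_n^i$ makes angle $\theta_n^i$ with $\hat{\mathbf{z}}$ and $\hat{\mathbf{r}}_n^s$ has angles $(\theta_n^s,\varphi_n^s)$. Since the Tx/Rx lie in the element far-field by assumption, the field impinging on any single cell is well approximated by a plane wave, so on the illuminated face the PO current is $\mathbf{J}_s=2\hat{\mathbf{n}}\times\mathbf{H}^i$ while it vanishes on the shadow face.

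Next I would write the scattered electric field in the far zone as
\begin{equation}
\mathbf{E}^s \approx -\frac{j\omega\mu_0\,e^{-jkR}}{4\pi R}\,\hat{\mathbf{r}}_n^s\times\!\Bigl(\hat{\mathbf{r}}_n^s\times \int_{S_n}\mathbf{J}_s(\mathbf{r}')\,e^{jk\hat{\mathbf{r}}_n^s\cdot\mathbf{r}'}\,dS'\Bigr).
\end{equation}
Because $\mathbf{J}_s$ inherits the incident plane-wave phase $e^{-jk\hat{\mathbf{r}}_n^i\cdot\mathbf{r}'}$, the integrand kernel becomes $\exp[jk(\hat{\mathbf{r}}_n^s-\hat{\mathbf{r}}_n^i)\cdot\mathbf{r}']$, and over the rectangle it separates into two one-dimensional integrals
\begin{equation}
\int_{-d_h/2}^{d_h/2}\!e^{jku_x x'}\,dx'=d_h\frac{\sin X}{X},\qquad X=\tfrac{kd_h}{2}u_x,
\end{equation}
and the analogous factor in $y$ giving $\sin Y/Y$ with $Y=\tfrac{kd_v}{2}u_y$, where $u_x,u_y$ are the $xy$-components of $\hat{\mathbf{r}}_n^s-\hat{\mathbf{r}}_n^i$. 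This yields the area prefactor $d_h d_v$ and the product of sinc patterns.

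To assemble the RCS I substitute into $\sigma=\lim_{R\to\infty}4\pi R^2|\mathbf{E}^s|^2/|\mathbf{E}^i|^2$. The $4\pi(d_h d_v/\lambda)^2$ prefactor collects the area-squared and the $k^2$ from the derivative in the radiation formula. The $\cos^2\theta_n^i$ term appears as $|\hat{\mathbf{n}}\cdot\hat{\mathbf{r}}_n^i|^2$: equivalently, it is the projected area of the cell along the incident ray, or (from the PO viewpoint) the magnitude squared of $\hat{\mathbf{n}}\times\mathbf{H}^i$ relative to $|\mathbf{H}^i|$ that governs the current strength. The second bracket follows from the double cross-product in the radiation integral: for the induced current oriented along $\hat{\mathbf{x}}$, $|\hat{\mathbf{r}}_n^s\times(\hat{\mathbf{r}}_n^s\times\hat{\mathbf{x}})|^2=1-\sin^2\theta_n^s\cos^2\varphi_n^s=\cos^2\theta_n^s\cos^2\varphi_n^s+\sin^2\varphi_n^s$, which is exactly the required factor.

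The main obstacle I anticipate is the polarization bookkeeping: the clean form of the second bracket requires pinning down the direction of $\mathbf{J}_s$, which in turn depends on the incident polarization and the plate normal, so one must verify that the experimental convention places $\mathbf{J}_s$ along $\hat{\mathbf{x}}$ (otherwise $\cos\varphi_n^s$ and $\sin\varphi_n^s$ swap roles). Secondary concerns are (i) justifying the PO approximation for a sub-wavelength cell by invoking the element-far-field hypothesis and deferring edge-diffraction to the subsequent RIS development, and (ii) ensuring that summing the per-cell RCS is consistent with the discretization of the plate announced just before the statement, which is standard practice once cross terms between cells are correctly handled at the field (not power) level in the later aggregation step.
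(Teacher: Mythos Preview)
Your proposal is correct and follows essentially the same Physical Optics route as the paper: a plane-wave incident field on the $xOy$ cell, the PO surface current $\mathbf{J}_s=2\hat{\mathbf{n}}\times\mathbf{H}^{in}$ (which, under the paper's $\hat{\mathbf{x}}$-polarized incidence, lies along $\hat{\mathbf{x}}$ and carries the $\cos\theta_i$ factor), a separable rectangular aperture integral producing the $\mathrm{sinc}(X)\,\mathrm{sinc}(Y)$ pattern, and substitution into the bistatic RCS definition. The only cosmetic difference is that the paper decomposes the far-field via the auxiliary radiation-vector components $N_\theta$, $N_\varphi$ of \cite{balanis2012advanced} and then forms $|E^s|^2=|E_\theta^s|^2+|E_\varphi^s|^2$, whereas you obtain the same angular factor directly from $|\hat{\mathbf{r}}_n^s\times(\hat{\mathbf{r}}_n^s\times\hat{\mathbf{x}})|^2$; the two computations are equivalent, and your anticipated ``polarization bookkeeping'' concern is exactly the step the paper resolves by fixing the incident $\mathbf{E}$-field along $\hat{\mathbf{x}}$ at the outset.
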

where $\theta _{n}^i$ $({\varphi _{n}^i})$, $\theta _{n}^s$ $({\varphi _{n}^s})$ denote the elevation (azimuth) angle of the incident wave, the elevation (azimuth) angle of the scattered wave relative to the element ${C_{n}}$, respectively. $X$ and $Y$ are defined in (\ref{metal X}) and (\ref{metal Y}), respectively.

\begin{proof}
\quad \emph{Proof:} The derivations are given in Appendix A. 
\end{proof}
\par It can be inferred from the result above that the RCS of a metal cell is related to the wave propagation direction, the observation direction, and the metal cell inherent properties like its aspect ratio and the size relative to the wavelength. In addition, once the direction of observation is determined, the scattered power is relatively larger near the specular direction  (i.e., $\theta _{n}^i= \theta _{n}^s $, $\varphi _{n}^i = \varphi _{n}^s + \pi $) and the scattering efficiency will increase when the direction is closer to the boresight of the plate. Thus the received power will be maximized when these two directions are matched, i.e., the waves are transmitted and received along the boresight of the plate.
\par Since RCS is a parameter originally describing the attenuation ratio between the incident and scattered power, its effect on the transmitting signal can be derived as $f_{n}^M\left( {{\bf{\hat r}}_{n}^i,{\bf{\hat r}}_{n}^s} \right) = \sqrt {\sigma _{n}^M\left( {{\bf{\hat r}}_{n}^i,{\bf{\hat r}}_{n}^s} \right)}$, which is the BSD function we mentioned in the previous section. 
Thus, the receive signal scattered by the whole metal plate can be formulated as 
\begin{equation}\label{total received signal}
{y^M} =  {\sum\limits_{n = 1}^{{N_v} {N_h}} {{h_{n}}f_{n}^M\left( {{\bf{\hat r}}_{n}^i,{\bf{\hat r}}_{n}^s} \right){g_{n}}} }{x}.
\end{equation}

When it comes to the case of RIS, the element structure is relatively more complicated than that of a metal cell, which leads to a more complex formulation representing its scattering characteristics. Specifically, on the surface of an RIS with practically discretized configurations, there are inevitably material discontinuities between copper coats and substrates within elements \cite{pei2021RIS5.8GHz}, which will cause spatial distribution discontinuities of the permittivity and permeability \cite{Comments1996RCS}, resulting in the edge diffraction of the incident waves. What is more, the control signal (e.g., external bias voltage for varactor diode based-RIS) applied on each RIS element will bring an extra impedance change. Thus, compared with a metal cell, additional phase and amplitude responses will appear on the RIS element. However, according to our design experience, the RIS element structure is usually designed flexibly to adapt to the working frequency bands and application requirements. 
In Corollary \ref{Corollary1}, we provide an adjusted RCS for RIS element in order to establish a simple yet general scattering model:
\begin{corollary}\label{Corollary1}
\emph{Within the element far-field regions, the RCS of a single RIS element $C_{n}$ under the given incident angle and observation angle can be formulated as:}
{\begin{equation} \label{RIS lemma}
\begin{aligned}
&\sigma_{n}^{R}\left(\mathbf{r}_{n}^{i}, \mathbf{r}_{n}^{s}\right)=4 \pi\left({{d_{v} d_{h}}}/{\lambda}\right)^{2}\left({\sin {(X)}}/{X}\right)^{2}\left({\sin {(Y)}}/{Y}\right)^{2} \cdot\\
& {(\cos ^{2} \theta_{n}^{i})} \left(\cos ^{2} \theta_{n}^{s} \cos ^{2} \varphi_{n}^{s}+\sin ^{2} \varphi_{n}^{s}\right) \mathscr{D}_{n}\left(\mathbf{r}_{n}^{i}, \mathbf{r}_{n}^{s}\right)
\end{aligned}
\end{equation}} where
\begin{equation} \label{RIS diffraction factor}
\begin{aligned}
&\mathscr{D}_{n}\left(\mathbf{r}_{n}^{i}, \mathbf{r}_{n}^{s}\right)= 1 \ - \\
&\mu \sin \left(({\theta_{n}^{i}+\theta_{n}^{s})}/{2}\right) \cos \left({k} d _v \left({(\sin \theta_{n}^{i}+\sin \theta_{n}^{s})}/{2}\right)\right).
\end{aligned}
\end{equation} $k$ is the wavenumber, ${\mu}$ is the diffraction loss factor describing energy loss ratio due to diffraction (depending on the medium types at the edges), and the meanings of other parameters are aligned with Theorem \ref{theorem1}. 
\end{corollary}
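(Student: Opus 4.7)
The plan is to build on the physical-optics (PO) derivation that underpins Theorem~\ref{theorem1}, and then augment it with an additional edge-diffraction contribution that reflects the material discontinuities specific to an RIS element. Concretely, I would write the total scattered field of a single RIS cell as a coherent sum
\[
\mathbf{E}^{s}_{n} = \mathbf{E}^{s,\mathrm{PO}}_{n} + \mathbf{E}^{s,\mathrm{edge}}_{n},
\]
where the first term is the PO field of an ideal conducting patch of size $d_v \times d_h$ (already computed in Appendix~A) and the second term accounts for the diffracted field generated at the copper/substrate transitions inside the cell. The RCS then follows by substituting this sum into the limiting definition $\sigma = \lim_{R \to \infty} 4\pi R^{2}\, |\mathbf{E}^{s}|^{2}/|\mathbf{E}^{i}|^{2}$.

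First I would reuse Theorem~\ref{theorem1} verbatim to extract the PO contribution, which immediately recovers the common envelope $4\pi(d_v d_h/\lambda)^{2}\cos^{2}\theta^{i}_{n}\,(\cos^{2}\theta^{s}_{n}\cos^{2}\varphi^{s}_{n} + \sin^{2}\varphi^{s}_{n})(\sin X/X)^{2}(\sin Y/Y)^{2}$ that appears in both (\ref{metal lemma}) and (\ref{RIS lemma}). The residual task is therefore just to identify the scalar modulation $\mathscr{D}_{n}$ of (\ref{RIS diffraction factor}) that multiplies this envelope.

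Next I would derive $\mathbf{E}^{s,\mathrm{edge}}_{n}$ via the Geometric/Physical Theory of Diffraction. Each copper--substrate interface within an RIS element acts as a line source of secondary cylindrical waves whose amplitude is proportional to a small, medium-dependent loss coefficient $\mu$. For the two dominant parallel edges of the cell, separated by the vertical pitch $d_{v}$, the element-far-field assumption linearises the phase across each edge and yields a path difference of $d_{v}(\sin\theta^{i}_{n} + \sin\theta^{s}_{n})$ between the two contributions. Coherent summation then produces the interference factor $2\cos\!\bigl(k d_{v}(\sin\theta^{i}_{n} + \sin\theta^{s}_{n})/2\bigr)$, while the usual Keller obliquity factor supplies the prefactor $\sin\!\bigl((\theta^{i}_{n}+\theta^{s}_{n})/2\bigr)$.

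Finally, forming $|\mathbf{E}^{s,\mathrm{PO}}_{n} + \mathbf{E}^{s,\mathrm{edge}}_{n}|^{2}$, I would retain the cross term that is first order in $\mu$, discarding the $O(\mu^{2})$ purely diffractive piece as negligible for the low-loss dielectrics used in practice. Because the edge waves radiate in phase opposition to the specular PO field over the reflecting area, this cross term enters with a negative sign, which yields the subtractive form $1 - \mu\sin(\cdot)\cos(\cdot)$ of (\ref{RIS diffraction factor}) and, after multiplication by the PO envelope, the full expression (\ref{RIS lemma}). I expect the main obstacle to be this last step: pinning down the sign convention and the precise obliquity factor, which requires a careful evaluation of the GTD edge-diffraction coefficients for the specific copper/substrate geometry and verification that the polarization-dependent vector contributions collapse into the compact scalar modulation $\mathscr{D}_{n}$ claimed in the corollary.
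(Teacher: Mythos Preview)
Your proposal is considerably more ambitious than what the paper actually does. The paper's justification of Corollary~\ref{Corollary1} is not a derivation at all: it simply takes the metal-cell RCS of Theorem~\ref{theorem1} and \emph{postulates} the multiplicative correction $\mathscr{D}_{n}$ as a phenomenological model, citing Ross (1966) only for the qualitative principle that edge-diffraction effects depend primarily on the elevation angles and intensify toward grazing incidence ($\theta^{i}_{n},\theta^{s}_{n}\to\pi/2$). There is no GTD computation, no coherent PO-plus-edge summation, and no first-order expansion in $\mu$; the specific functional form $1-\mu\sin\bigl((\theta^{i}_{n}+\theta^{s}_{n})/2\bigr)\cos\bigl(kd_{v}(\sin\theta^{i}_{n}+\sin\theta^{s}_{n})/2\bigr)$ is offered as an empirically reasonable ansatz with a single fitting parameter $\mu$, to be validated by the measurements in Section~IV.

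Your route---writing $\mathbf{E}^{s}_{n}=\mathbf{E}^{s,\mathrm{PO}}_{n}+\mathbf{E}^{s,\mathrm{edge}}_{n}$, evaluating the edge term via Keller-type diffraction coefficients, and keeping the $O(\mu)$ cross term---is a legitimate and more principled path that would, if it closed, \emph{explain} rather than merely assert the form of $\mathscr{D}_{n}$. The risk you already flag is real: a careful GTD calculation for the actual copper/substrate geometry may not collapse cleanly to the scalar, polarization-independent modulation with exactly the obliquity factor $\sin\bigl((\theta^{i}_{n}+\theta^{s}_{n})/2\bigr)$ and the single edge spacing $d_{v}$, so you may end up with a slightly different (or more complicated) $\mathscr{D}_{n}$ than the one stated. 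That is not a defect of your plan so much as a reflection of the fact that the corollary is a modeling choice rather than a theorem in the strict sense.
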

\begin{proof}
\quad \emph{Proof:} Based on eq. (\ref{metal lemma}), it is proposed  with an additional diffraction component to depict the edge diffraction effects of the RIS element, which is inspired from the first principle in \cite{ross1966angle}: the diffraction effects mainly depend on the elevation angles, which are more evident when approaching grazing direction (i.e., $\theta _{n}^i = \pi /2$ and $\theta _{n}^s = \pi /2$).
\end{proof}
\begin{figure}[t]
\centering 
  \includegraphics[width=3in]{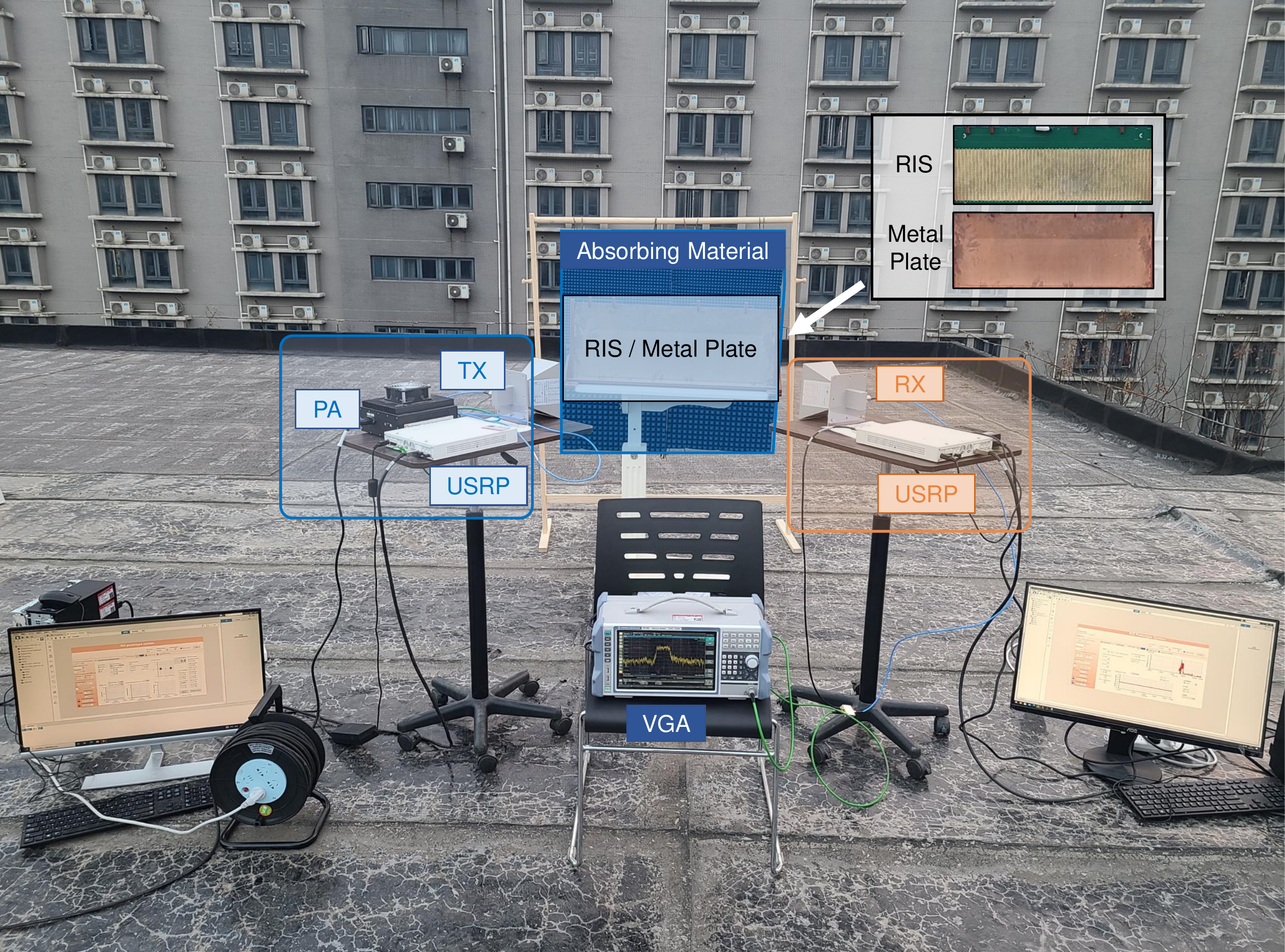}
 \caption{The testing and measurement scenarios.}\label{Experimental site}
\end{figure}
Accordingly, the BSD of the RIS element $C_{n}$ is derived as $f_{n}^R\left( {{\bf{r}}_{n}^i,{\bf{r}}_{n}^s} \right) = \sqrt {\sigma _{n}^R\left( {{\bf{r}}_{n}^i,{\bf{r}}_{n}^s} \right)}$. Finally, the received signal scattered by the whole RIS plane is:
\begin{equation}\label{RIS received signal}
\begin{array}{l}
{y^R} = {\sum\limits_{n = 1}^{{N_v} {N_h}} {{h_{n}}{R_{n}}({u_{n}})} } {\mathop{f}_{n}^R}\left( {{\bf{r}}_{n}^i,{\bf{r}}_{n}^s} \right){g_{n}}{x}.
\end{array}
\end{equation} 
\begin{figure*}[t]
    \begin{subfigure}[t]{0.32\textwidth}
           \centering
           \includegraphics[width=\textwidth]{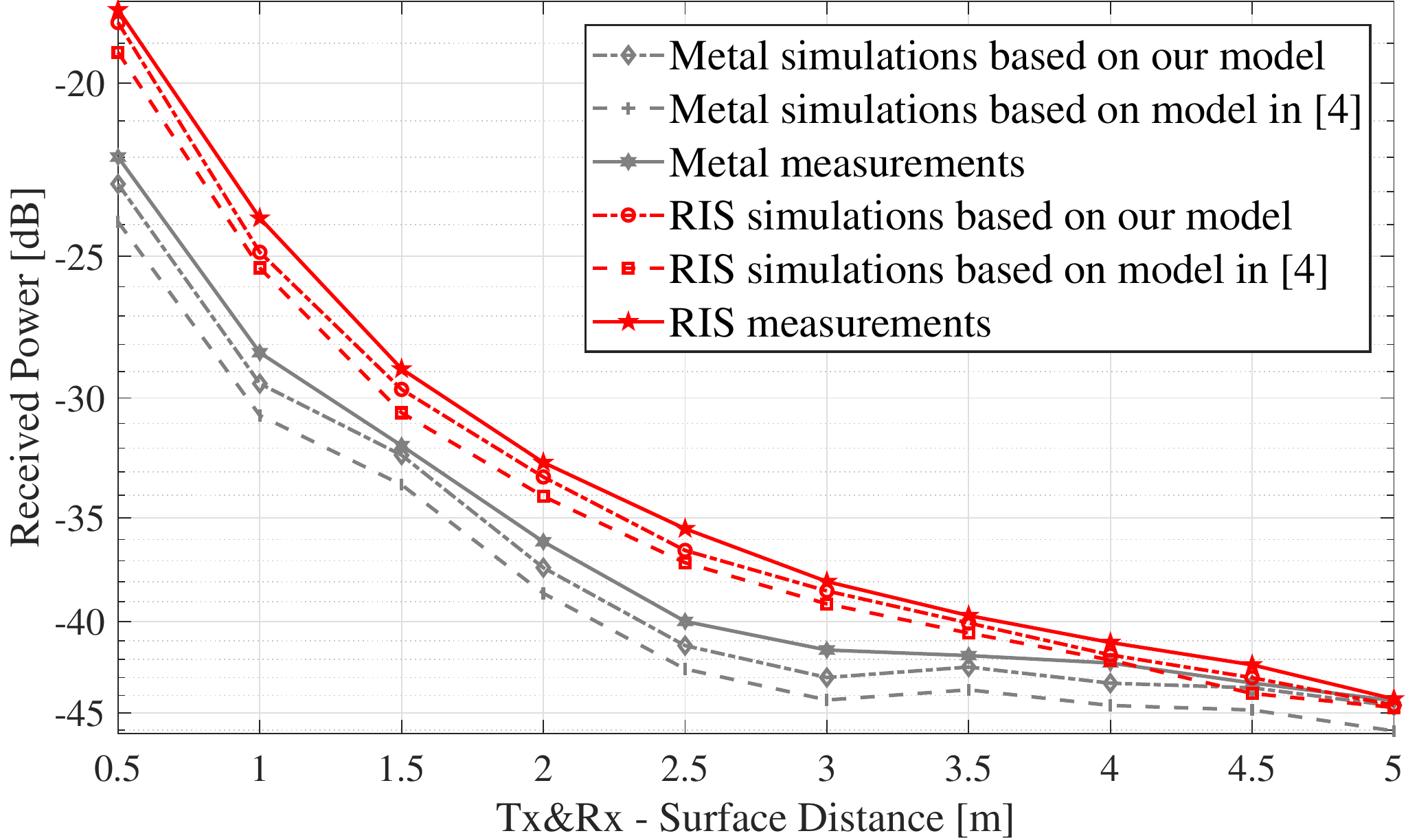}
            \caption{}
            \label{distance_30du}
    \end{subfigure}
    \begin{subfigure}[t]{0.32\textwidth}
            \centering
            \includegraphics[width=\textwidth]{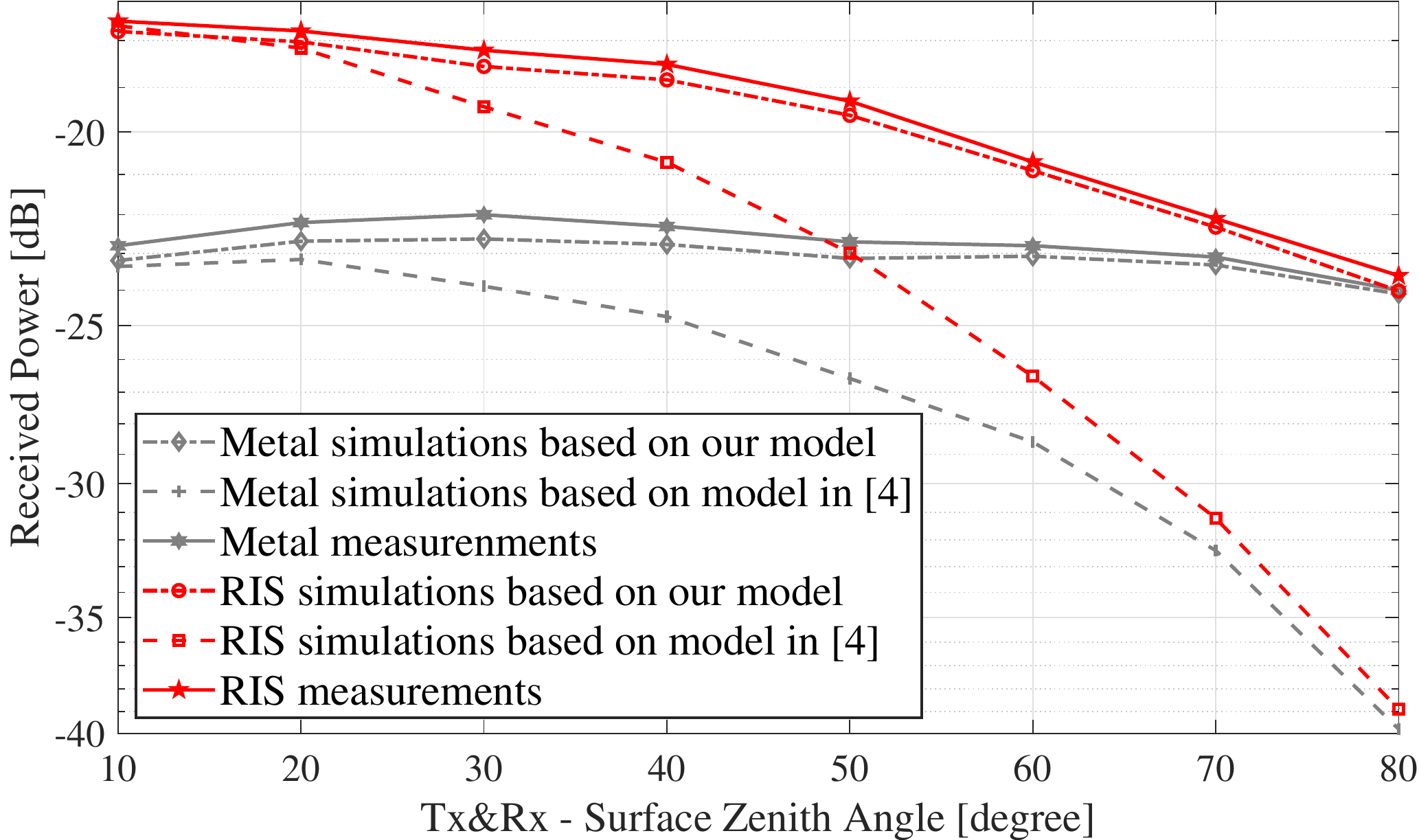}
            \caption{}
            \label{sweep_angle_50cm}
    \end{subfigure}
    \begin{subfigure}[t]{0.32\textwidth}
            \centering
            \includegraphics[width=\textwidth]{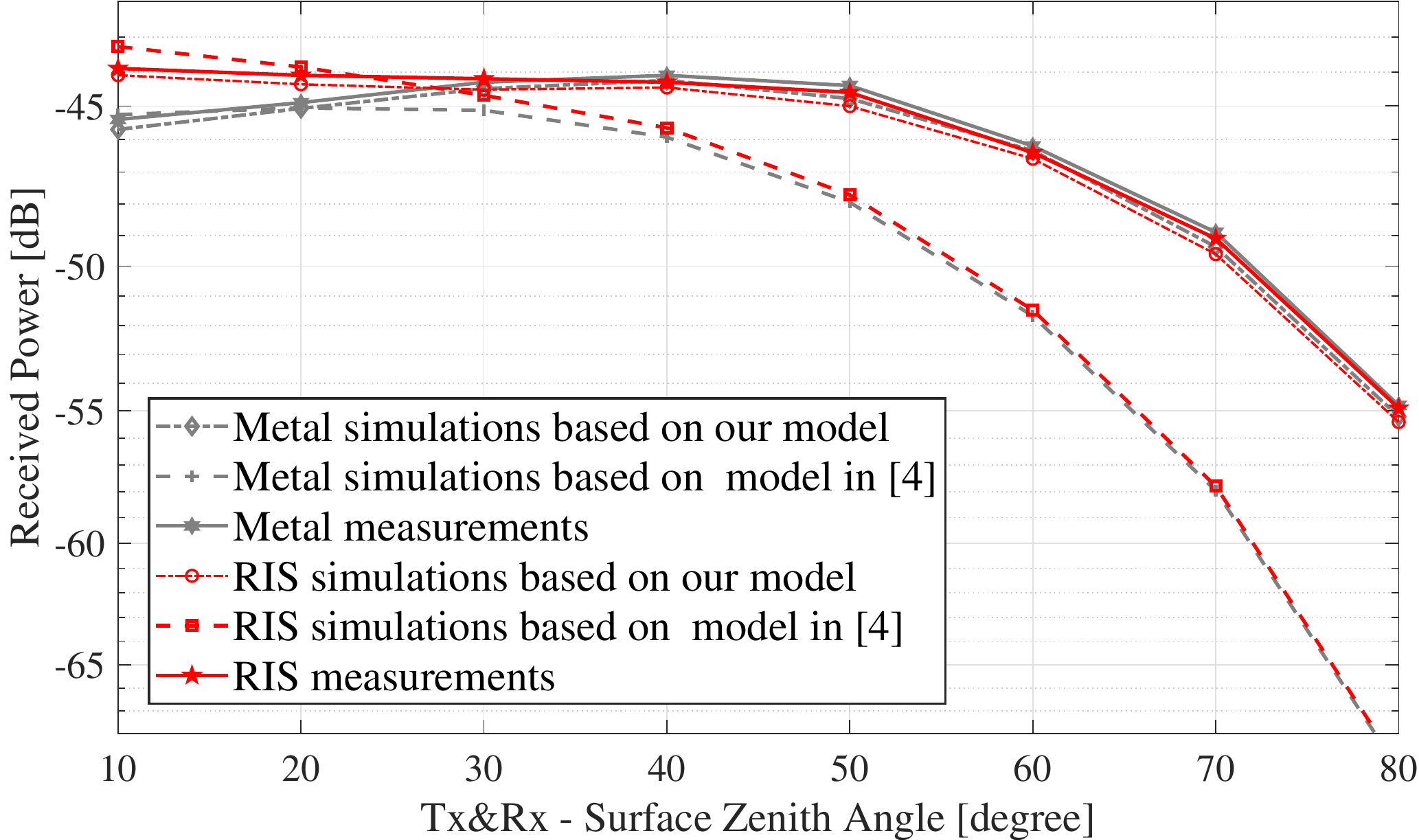}
            \caption{}
  \label{sweep_angle_5m}
    \end{subfigure}
    \caption{(a) Received power versus the Tx\&Rx - Surface distances at 30 degree zenith angle. (b) Received power versus the zenith angles, the Tx\&Rx - surface distances are 50 cm. (c) Received power versus the zenith angles, the Tx\&Rx-surface distances is 5 m. }
\end{figure*}
\section{Experimental Results and Analysis}
In this section, we start by briefly introducing our experimental configuration and then validate the power focusing capability of RIS, as well as our proposed models of the received signal. 
Fig. \ref{Experimental site} illustrates the measurement environments. The basic parameters of our field measurements and details of the RIS platform can be found in our previous work \cite{pei2021RIS5.8GHz}. Before the experiments, we have calibrated the system by aligning the transmit and receive antenna under a line-of-sight setting.

Since the configuration goal is to focus signals at the receiver side, here we iterate the algorithm introduced in \cite{pei2021RIS5.8GHz} for several times to approach the optimal configuration.
To prove experimentally the power focusing superiority of RIS, we compare the received power with the assistance of RIS and that of a same-sized well-rotated metal plate under several different scenarios. As depicted in Fig. \ref{Experimental skectchmap}, ${\theta ^{t}}$ and ${\theta ^{r}}$ denotes the zenith angle from the origin to the Tx, the zenith angle from the origin to the Rx, respectively. Since we are interested in the comparisons that are most effective for the metal plate to enhance the receiving signals, we always set the two angles to be identical in the subsequent content, as shown in Fig. \ref{Experimental site}.

The model proposed in \cite{tang2021mmwave} considered several factors that impact the received power in RIS-assisted system, which is suitable to be adopted here for comparisons. Their normalized power radiation pattern for each unit can be equivalently rewritten in the form of RCS as \cite{tang2021mmwave} (notations have been modified to be consistent with our paper) : $\sigma _{n}^C \left(\hat{\mathbf{r}}_{n}^{i}, \hat{\mathbf{r}}_{n}^{s}\right)= {\cos ^2}\theta _{n}^i{\cos ^2}\theta _{n}^s.$ Thus, by squaring the received signal and adding the normalization factor, the total received power can be formulated as in \cite{Marco2022model}:
\begin{equation}
{P_r} = \frac{{{P_t}{\lambda ^2}}}{{4\pi }}{\left|{ {\sum\limits_{n = 1}^{{N_v} {N_h}} {{h_{n}}{R_{n}}} } \sqrt {{\sigma _{n}}} {g_{n}}} \right|^2}
\end{equation}
\par For fair comparison, other factors except for the RCS item are set to be the same for different models in [4] and ours. And for the metal plate, the coefficient ${R_{n}}$ is set to 1.
\par Fig. 3(\subref{distance_30du}) shows the measurement results and the simulation results based on the proposed models at the specular zenith angle of 30 degree (i.e., ${\theta ^{t}}={\theta ^{r}}={30^ \circ }$) and the same Tx\&Rx-surface distances. It mainly illustrates the relationship between the received power and the communication distances. We can observe that within the border between near field and far field of RIS ($\frac{{2{N_v}{N_h}{d_v}{d_h}}}{\lambda }  \simeq 6$ m) \cite{wankai2020model}, as the Tx-surface and Rx-surface distances increase simultaneously, the received power of the RIS-aided system is always higher due to the focalization of RIS, yet decreases faster than the metal-aided one. The focusing superiority of RIS gradually disappears as the distances approach 5 m, which makes sense since the phase shifts (seen from both the Tx and the Rx) between the surface center and edges will decrease rapidly as the distances increase. Moreover, the relative side lobe level (RSLL) of a well-configured RIS will decrease with distance, which also impacts the focusing ability of it. The measurement results have similar trend with simulation results based on our model, however with a slight increase which might result from the multi-path effect caused by scatterers in the environment. 
\par Fig. 3(\subref{sweep_angle_50cm}) shows the measurement results and the simulation results based on the proposed models at Tx\&Rx - surface distances of 50 cm. It mainly illustrates the relationship between the zenith angles and the received power. We can observe that the received power of both RIS and metal plate assisted system decrease as the zenith angles deviated from the boresight and the former decreases faster. It is in line with our expectation since their scattering efficiency (represented by the RCS) will decrease in a similar manner. Within such a short distance, the focusing ability of RIS is obvious for most angles. This is because the phase differences over the scattering surface are relatively large, and most of the transmitting power can be utilized here, thus the reconfiguable phase shifts (which determine the actual scattering direction) of RIS is more important than a little higher RCS of the metal plate.

When we increase the Tx-surface and RX-surface distances to 5 m, as shown in Fig. 3(\subref{sweep_angle_5m}), it is interesting that the focusing ability of RIS still exists at small zenith angles. And the received power curves of the RIS and the metal plate aided systems begin to cross as the angle increases. Several causes may lead to this result. First, the maximum scattering efficiency (usually occurs in a small zenith angle) of RIS and the metal plate is different. Second, as mentioned above, the scattering efficiency of RIS will decrease faster as the zenith angles increase due to its edge diffraction phenomenon. Furthermore, at the relatively longer distances, the phase differences are more difficult to distinguish, which is not beneficial for RIS to focus signals.

\section{Conclusion}
In this paper, novel scattering models based on wave optics theory have been derived and explicitly formulated for both the metal plate and the RIS. The simulation results from these models showed good agreements with the field measurement results, which indicated the effectiveness of the proposed models. And we validated the power focusing superiority of a well-configured RIS by both simulation and measurement results, proved that it outperforms a well-rotated metal plate with the same shape and size in most cases, especially at relatively close distance and small zenith angle from the transceivers.




%

\appendices
\section{Proof of Theorem \ref{theorem1}}
\begin{proof}
\emph{Proof:}
In the following, footmarks indicating specificis units are temporarily dropped to make the text more concise. Consider a point source located at ${{\bf{p}}_t} = {x_t}\hat{\bf{x}}+{y_t}\hat{\bf{y}}+{z_t}\hat{\bf{z}}$ generates an incident electric field $\mathbf{E}^{i n}\left(\mathbf{p}_{t}, \mathbf{r}\right)$ at the $n$-th element cell located at ${{\bf{r}}} = {x}\hat{\bf{x}}+{y}\hat{\bf{y}}+{z}\hat{\bf{z}}$.  The incident wave can be given by the tensor Green's function $\mathbf{G}(\mathbf{r})$, the resulting wave at $\bf{r}$ is
\begin{equation}
\mathbf{E}^{i n}\left(\mathbf{p}_{t}, \mathbf{r}\right)=\mathbf{G}\left(\mathbf{r}-\mathbf{p}_{t}\right) \mathbf{J}^{t}\left(\mathbf{p}_{t}\right),
\end{equation} where $\mathbf{J}^{t}\left(\mathbf{p}_{t}\right)$ is the monochromatic source current. And since the EM waves our practical system adopt is linearly polarized wave, we accordingly assume here only the $x$ direction of the current is excited at the source.
Due to the fact that we may always in the far field of a sub-wavelength element cell, the wave arrives at the element is approximately a plane wave, thus we mainly focus on its phase shift over the element and assume a uniform amplitude $E_{0}^{in}$. Let $\mathbf{k}\left(\mathbf{p}_{t}, \mathbf{r}\right)$ denote the wavevector of the incident wave:
\begin{equation}
\mathbf{k}\left(\mathbf{p}_{t}, \mathbf{r}\right)=k\left(\sin \theta_{i} \cos \varphi_{i} \hat{\mathbf{x}}+\sin \theta_{i} \sin \varphi_{i} \hat{\mathbf{y}}-\cos \theta_{i} \hat{\mathbf{z}}\right),
\end{equation} which identifies the direction of the incident wave and $\theta_{i}$ ($\varphi_{i}$) denote the elevation (azimuth) angle of it. Therefore, the incident electric field can be formulated as
\begin{equation} \label{Eplane}
\mathbf{E}^{i n}\left(\mathbf{p}_{t}, \mathbf{r}\right)=E_{0}^{i n} \mathrm{e}^{-j k\left(\sin \theta_{i} \cos \varphi_{i} x+\sin \theta_{i} \sin \varphi_{i} y-\cos \theta_{i} z\right)} \hat{\mathbf{x}}.
\end{equation}
Then, by applying Maxwell's equations, the incident magnetic field can be obtained as
\begin{small}
\begin{equation}
\begin{aligned}
\mathbf{H}^{i n}\left(\mathbf{p}_t, \mathbf{r}\right)&=-\frac{1}{j \omega \mu_0} \nabla_{\mathbf{r}} \times \mathbf{E}^{i n}\left(\mathbf{p}_t, \mathbf{r}\right) \\
& =\begin{small}{-\frac{E_0^{i n}}{\eta_0}\left(\cos \theta_i \hat{\mathbf{y}}+\sin \theta_i \sin \varphi_i \hat{\mathbf{z}}\right) }\end{small} \\ & \cdot \mathrm{e}^{-j k\left(\sin \theta_i \cos \varphi_i x+\sin \theta_i \sin \varphi_i y-\cos \theta_i z\right)},
\end{aligned}
\end{equation}
\end{small}where $\omega$ is the angular frequency of the wave,  $\mu_{0}$ and $\eta_{0}$ denote the permittivity and impedance of free-space, respectively.
Note that the element cells are all located on the $xOy$ plane (i.e., $z=0$), and we temporarily drop the arguments in $\mathbf{H}^{i n}\left(\mathbf{p}_{t}, \mathbf{r}\right)$ to lighten the notations:
\begin{small}
\begin{equation} 
\begin{aligned}
\mathbf{H}^{i n} &=\left.\mathbf{H}^{i n}\left(\mathbf{p}_{t}, \mathbf{r}\right)\right|_{z=0} \\
&=-\frac{E_{0}^{in}}{\eta_{0}}\left(\cos \theta_{i} \hat{\mathbf{y}}+\sin \theta_{i} \sin \varphi_{i} \hat{\mathbf{z}}\right) \mathrm{e}^{-j k\left(\sin \theta_{i} \cos \varphi_{i} x+\sin \theta_{i} \sin \varphi_{i} y\right)}.
\end{aligned} 
\end{equation} \end{small}Since the metal plate is a kind of good conductor whose thickness is negligible here, the equivalent current density at the incident point can be approximated as twice the amplitude of the incident tangential magnetic field components \cite{radarhandbook}:
\begin{small}
\begin{equation}
\begin{aligned}
\mathbf{J}^{s} & \simeq 2 \hat{\mathbf{n}} \times\left.\mathbf{H}^{in}\right|_{x=x^{\prime}  \atop y=y^{\prime}}  \\
&=2 \frac{E_{0}^{in}}{\eta_{0}} \cos \theta_{i} \mathrm{e}^{-j k\left(\sin \theta_{i} \cos \varphi_{i} x^{\prime}+\sin \theta_{i} \sin \varphi_{i} y^{\prime}\right)} \hat{\mathbf{x}},
\end{aligned}
\end{equation}
\end{small}where $\hat{\mathbf{n}}$ is the the unit surface normal. Therefore, by applying the above equation and adopting the auxiliary variables of vector potential defined in \cite{balanis2012advanced}, we can obtain that
\begin{small}
\begin{equation}
\begin{aligned}
N_{\theta}=&\left.\iint_{S}\left(J_{x}^{s} \cos \theta_{s} \cos \varphi_{s}+J_{y}^{s} \cos \theta_{s} \sin \varphi_{s}-J_{z}^{s} \sin \theta_{s}\right)\right|_{J_{y}^{s}=0 \atop J_{z}^{s}=0} \\
& \cdot e^{-j k\left(\sin \theta_{i} \cos \varphi_{i} x^{\prime}+\sin \theta_{i} \sin \varphi_{i} y^{\prime}\right)} d x^{\prime} d y^{\prime} \\
&= 2 {d _v} {d _h} \frac{E_{0}^{i n}}{\eta_{0}}\left[\cos \theta_{i} \cos \theta_{s} \cos \varphi_{s} ({\sin (X)}/{X}) ({\sin (Y)}/{Y})\right],
\end{aligned}
\end{equation}
\begin{equation}
\begin{aligned}
N_{\varphi}&=\left.\iint_{S}\left(-J_{x}^{s} \sin \varphi_{s}+J_{y}^{s} \cos \varphi_{s}\right)\right|_{J_{y}^{s}=0} \\
&\cdot e^{-j k\left(\sin \theta_{i} \cos \varphi_{i} x^{\prime}+\sin \theta_{i} \sin \varphi_{i} y^{\prime}\right)} d x^{\prime} d y^{\prime} \\
&= 2 {d _v} {d _h} \frac{E_{0}^{i n}}{\eta_{0}}\left[\cos \theta_{i} \sin \varphi_{s} ({\sin (X)}/{X})({\sin (Y)}/{Y})\right],
\end{aligned}
\end{equation} \end{small}where $\theta_{s}$ ($\varphi_{s}$) denote the elevation (azimuth) angle of the scattered wave, and
\begin{small}
\begin{equation} \label{metal X}
X=\frac{\pi {d _v}}{\lambda}\left(\sin \theta_{s} \cos \varphi_{s}+\sin \theta_{i} \cos \varphi_{i}\right),
\end{equation}
\begin{equation} \label{metal Y}
Y=\frac{\pi {d _h}}{\lambda}\left(\sin \theta_{s} \sin \varphi_{s}+\sin \theta_{i} \sin \varphi_{i}\right).
\end{equation}
\end{small}

Therefore, we can obtain the amplitude components of the scattered electric field as:
\begin{small}
\begin{equation}
\left|E_{\theta}^{s}\right|=\frac{k {d _v} {d _h} E_{0}^{i n}}{2 \pi r}\left[\cos \theta_{i} \cos \theta_{s} \cos \varphi_{s} ({\sin (X)}/{X}) ({\sin (Y)}/{Y})\right],
\end{equation}
\begin{equation}
\left|E_{\varphi}^{s}\right|=\frac{k {d _v} {d _h} E_{0}^{i n}}{2 \pi r}\left[\cos \theta_{i} \sin \varphi_{s} ({\sin (X)}/{X}) ({\sin (Y)}/{Y})\right].
\end{equation}
\end{small}
\par Finally, by adopting (\ref{Eplane}) and the original RCS definition, as well as \begin{small}$\left|E^{s}\right|=\sqrt{\left|E_{\theta}^{s}\right|^{2}+\left|E_{\varphi}^{s}\right|^{2}}$ \end{small}, we can obtain the RCS of a metal plate of size ${d _v} \times {d _h}$ as:
\begin{small}
\begin{equation}
\begin{aligned}
\sigma^{M}&=4 \pi\left(\frac{{d _v} {d _h}}{\lambda}\right)^{2}  \cos ^{2} \theta _{i}   \left(\cos ^{2} \theta _{s} \cos ^{2} \varphi _{s}+\sin ^{2} \varphi _{s}\right)\cdot \\
& \left({\sin ({X})}/{X}\right)^{2}\left({\sin ({Y})}/{Y}\right)^{2}.
\end{aligned}
\end{equation}
\end{small}
Thus Theorem \ref{theorem1} is proved.
\end{proof}




\bibliographystyle{IEEEtran}
\bibliography{IEEEabrv,Bibliography}





\end{document}